  \providecommand\BibTeX{{%
    \normalfont B\kern-0.5em{\scshape i\kern-0.25em b}\kern-0.8em\TeX}}}
\newtheorem{definition}{Definition}
\pgfplotsset{compat=1.9}
\begin{document}

\title{Personalized Counterfactual Fairness in Recommendation}





\author{Yunqi Li, Hanxiong Chen, Shuyuan Xu, Yingqiang Ge, Yongfeng Zhang}
\affiliation{
  \institution{Computer Science, Rutgers University, New Brunswick, NJ 08854, USA}
  \country{}
}
\email{{yunqi.li, hanxiong.chen, shuyuan.xu, yingqiang.ge, yongfeng.zhang}@rutgers.edu}

\begin{abstract}
Recommender systems are gaining increasing and critical impacts on human and society since a growing number of users use them for information seeking and decision making. Therefore, it is crucial to address the potential unfairness problems in recommendations.

Just like users have personalized preferences on items, users' demands for fairness are also personalized in many scenarios. Therefore, it is important to provide \textit{personalized} fair recommendations for users to satisfy their \textit{personalized} fairness demands. Besides, previous works on fair recommendation mainly focus on association-based fairness. However, it is important to advance from associative fairness notions to causal fairness notions for assessing fairness more properly in recommender systems. Based on the above considerations, this paper focuses on achieving personalized counterfactual fairness for users in recommender systems. To this end, we introduce a framework for achieving counterfactually fair recommendations through adversary learning by generating feature-independent user embeddings for recommendation. The framework allows recommender systems to achieve personalized fairness for users while also covering non-personalized situations. Experiments on two real-world datasets with shallow and deep recommendation algorithms show that our method can generate fairer recommendations for users with a desirable recommendation performance.

\end{abstract}

\begin{CCSXML}
<ccs2012>
<concept>
<concept_id>10010147.10010257</concept_id>
<concept_desc>Computing methodologies~Machine learning</concept_desc>
<concept_significance>500</concept_significance>
</concept>
<concept>
<concept_id>10002951.10003317.10003347.10003350</concept_id>
<concept_desc>Information systems~Recommender systems</concept_desc>
<concept_significance>500</concept_significance>
</concept>
</ccs2012>
\end{CCSXML}

\ccsdesc[500]{Computing methodologies~Machine learning}
\ccsdesc[500]{Information systems~Recommender systems}

\keywords{Personalized Fairness; Counterfactual Fairness; Recommender System; Adversary Learning}


\maketitle

\section{Introduction}


Recently, there has been growing attention on fairness considerations in recommendation models 
\cite{xiao2017fairness,yao2017beyond, burke2017multisided, leonhardt2018user, beutel2019fairness,li2021user,ge2021towards,fu2020fairness}. The fairness problem in recommender systems---which are known as multi-stakeholder platforms---should be considered from different perspectives, including user-side, item-side or seller-side \cite{burke2017multisided}. Compared with the many fairness research on item- or seller-side \cite{adomavicius2011improving,kamishima2014correcting,abdollahpouri2017controlling,abdollahpouri2019managing,singh2018fairness}, fairness issues on the user-side has been less studied in recommender systems. One challenge for user-side fairness research---compared to the item-side---is that different users' fairness demands can be different due to their personalized preferences. For example, as shown on Figure.\ref{fig:pic1}, some users may be sensitive to the gender and do not want their recommendations to be influenced by this feature, while others may care more about the age feature and are less concerned about gender. As a result, it is important to explore \textit{personalized fairness} in recommendation scenarios.
However, many existing works consider fairness on the same set of sensitive features for all users, and personalized fairness demands are largely neglected. To better assess the unfairness issues in recommendation, it is important to enhance fairness from the personalized view.

\begin{figure}[t]
    \centering
    \includegraphics*[scale=1.0]{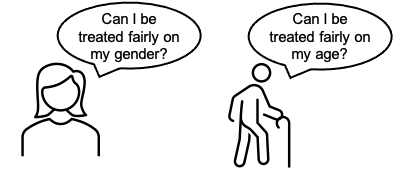}
    \vspace{-10pt}
    \caption{Example of users' personalized fairness demands.}
    \label{fig:pic1}
\vspace{-10pt}
\end{figure}

Besides, existing work about achieving fairness in recommendations are mainly based on association-based fairness notions, which primarily focus on discovering the discrepancy of statistical metrics between individuals or sub-populations. For example, equalized odds \cite{hardt2016equality}, which is one of the most basic criteria for fairness, requires that the false positive rate and true positive rate should be equal for protected group and advantaged group. However, recent research show that fairness cannot be well assessed only based on association notions \cite{khademi2019fairness,kusner2017counterfactual,zhang2018equality,zhang2018fairness}.
A classic example is the Simpson’s paradox \cite{pearl2009causality}, where the statistical conclusions drawn from the sub-populations and the whole population can be different. In the context of fairness modeling, association-based notions cannot reason about the causal relations between the protected features and the model outcomes. Different from association-based notions, causal-based notions leverage prior knowledge about the world structure in the form of causal models, and thus can help to understand the propagation of variable changes in the system. Therefore, causal-based fairness notions are more and more important to addressing discrimination in machine learning models \cite{khademi2019fairness,kusner2017counterfactual,zhang2018equality,zhang2018fairness}. In this paper, except for the above mentioned personalized view, we also consider fairness in recommendation from a causal view.

To realize personalized \textit{and} causal fairness for recommendation, we pursue personalized counterfactual fairness in this paper. Counterfactual fairness is an individual-level causal-based fairness notion \cite{kusner2017counterfactual}, which considers a hypothetical world beyond the real world. To enable fairness, it requires the probability distributions of the model outcomes to be the same in the factual and the counterfactual world for each individual. This individual-level view makes counterfactual fairness a nice fit for both personalized fairness demands and causal fairness notions. In this paper, we expect a recommender system to be counterfactually fair if the recommendation results for a user are unchanged in the counterfactual world where the user's features remain the same except for certain sensitive features specified by the user. This is to grant users with the right to tell us which features---such as gender, race, age, etc.---that they care about and that they want their recommendations to be irrelevant to.
Technically, we introduce a framework for generating recommendations that are independent from the sensitive features so as to meet the counterfactual fairness requirements. We first analyze how we can guarantee the independence between sensitive features and recommendation outcomes. Specifically, we control the dependence between sensitive features and the user embeddings based on the causal graph of the general recommendation pipeline. To generate feature-independent user embeddings, we introduce an adversary learning approach to remove the information of the sensitive features from the user embeddings, while keeping the embeddings informative enough for the recommendation task. To achieve personalized fairness, we allow each user to select a set of sensitive features that they care about. Finally, we provide two methods---the Separate Method (SM) and the Combination Method (CM)---for generating personalized fair recommendations conditioned on the user's sensitive features.
Our experiments on two real-world datasets with different types of shallow or deep recommendation algorithms show that our method is able to enhance personalized counterfactual fairness in recommendation with a desirable recommendation performance. 

The key contributions of this paper are as follows:

\begin{itemize}

    \item We consider unfairness issues in recommendation from a personalized perspective to achieve personalized fairness.
     \item We consider unfairness issues in recommendation from a causal perspective to achieve counterfactual fairness. 
    \item We introduce a framework for achieving personalized counterfactual fairness in recommendation based on adversarial learning.
    \item We conduct experiments on two real-world datasets with both shallow and deep models to show the effectiveness of our framework on enhancing fairness in recommendation.
\end{itemize}

In the following, we review related work in Section \ref{sec:related}. Before we introduce the method, we provide some preliminaries and notations in Section \ref{sec:notations}. In Section \ref{sec:framework}, we introduce the details of our framework. Experimental settings and results are provided in Section \ref{sec:experiments}. Finally, we conclude this work in Section \ref{sec:conclusions}.

\section{Related Work}
\label{sec:related}

\subsection{Association-based Fairness}
Fairness is becoming more and more important in machine learning \cite{kamishima2012fairness,pedreshi2008discrimination, singh2018fairness}. Overall, there are two basic frameworks for algorithmic fairness, i.e., group fairness and individual fairness. Group fairness requires that the protected group and advantaged group should be treated similarly \cite{pedreschi2009measuring},
while individual fairness requires that similar individuals are treated similarly. Individual fairness is relatively more difficult to precisely define due to the lack of agreement on similarity metrics for individuals in different tasks \cite{dwork2012fairness}. 

The first endeavor to achieve fairness in the community is to develop association-based (or correlation-based) notions for fairness, which aims to find the discrepancy of statistical metrics between individuals or sub-populations. More specifically,
early works about fairness are mostly on classification tasks, which design algorithms that are compatible with fairness constraints \cite{zemel2013learning,woodworth2017learning}. For binary classification, fairness metrics can be expressed by rate constraints, which regularize the classifier's positive or negative rates over different protected groups \cite{narasimhan2018learning,cotter2019two}. For example, demographic parity requires that the classifier's positive rate should be the same across all groups. To achieve fairness, the training objective is usually optimized together with such constraints over fairness metrics 
\cite{goh2016satisfying,agarwal2018reductions}. 

Some recent works have also considered the fairness of ranking tasks. Some works directly learn a ranking model from scratch \cite{singh2019policy,zehlike2020reducing,narasimhan2020pairwise,xiao2017fairness}, while others consider re-ranking or post-processing algorithms for fair ranking \cite{celis2017ranking, biega2018equity,li2021user}.
The fairness metrics for ranking tasks are usually defined over the exposure of items that belong to different protected groups. As summarized in \cite{narasimhan2020pairwise}, such metrics include the unsupervised criteria and the supervised criteria. Unsupervised criteria posit that the average exposure at the top of the ranking list is equal for different groups \cite{celis2017ranking, singh2018fairness,zehlike2020reducing}, while the supervised criteria require the average exposure of item groups to be proportional to their average relevance to the query \cite{biega2018equity,singh2019policy}.

Recommendation algorithm can usually be considered as a type of ranking algorithms. However, it is also special in that personalization is a very fundamental consideration for recommendation. As a result, different from previous fairness ranking algorithms which usually consider fairness from the item-side, we also need to consider fairness on the user-side in recommendation, as well as users' personalized fairness demands.

\subsection{Causal-based Fairness}

Recently, researches have noticed that fairness cannot be well assessed merely based on correlation or association \cite{khademi2019fairness,kusner2017counterfactual,zhang2018equality,zhang2018fairness}, since they cannot reason about the causal relations between input and output. However, real discrimination may result from a causal relation between the model decisions (e.g. hiring and admission) and the sensitive features (e.g. gender and race). Therefore, 
causal-based fairness notions are proposed. Causal-based fairness notions are mostly defined on intervention or counterfactual. Intervention can be achieved through random experiments, while counterfactual considers a hypothetical world beyond the real world. We introduce some important causal notions as follows.


Total effect (TE) \cite{pearl2009causality} measures the effect of changing the sensitive feature along all the causal paths to the outcome. Treatment on the treated (ETT) \cite{pearl2009causality}, which is the most basic fairness notion under counterfactuals, measures the difference between the real world and the counterfactual world where the sensitive feature changes for the individual. Both TE and ETT seek the equality of outcomes between protected and unprotected groups. Disparate treatment \cite{barocas2016big} is another framework which aims at ensuring the equality of treatment by prohibiting the use of sensitive features when making decisions, including direct effect, indirect effect and path-specific effect. Direct discrimination is measured by the causal effect along the causal path from the sensitive feature to the final decision \cite{pearl2013direct}; indirect discrimination is assessed by the causal effect along the causal path through proxy features \cite{pearl2013direct}; while path-specific effect \cite{pearl2009causality} characterizes the causal effect over specific paths. Based on this, various causal-based notions have been put forward. Examples include: No unresolved discrimination \cite{kilbertus2017avoiding}, which measures the indirect causal effects from sensitive features to outcomes and requires that there is no directed path from sensitive features to outcomes except via a resolving variable;
Equality of effort \cite{huan2020fairness}, which measures how much efforts are needed from the protected group or individual to reach a certain level of outcome to identify discrimination; PC-Fairness \cite{wu2019pc}, which can cover lots of causal-based fairness notions by tuning its parameters. A more comprehensive list of causal-based fairness notions are provided in \cite{makhlouf2020survey}.

This paper aims at achieving counterfactual fairness in recommendation.
Counterfactual fairness \cite{kusner2017counterfactual} is a fine-grained variant of ETT conditioned on all features. It requires the probability distribution of the outcome to be the same in the factual and counterfactual worlds for every individual. We will introduce the definition of counterfactually fair recommendation in detail in the preliminaries. 

\subsection{Fair Recommendation}
Different from fair classification and ranking, the concept of fairness in recommendation can be more complex as it extends to multiple stakeholders \cite{burke2017multisided}.
Recent works on fairness in recommendations have very different views.
\citeauthor{xiao2017fairness} \cite{xiao2017fairness} introduced an optimization framework for fairness-aware group recommendation based on Pareto Efficiency. \citeauthor{yao2017beyond} \cite{yao2017beyond} explored fairness in collaborative filtering recommender systems, which proposed four metrics to assess different types of fairness by adding fairness constraints to the learning objective. \citeauthor{burke2017multisided}
\cite{burke2017multisided} and  \citeauthor{abdollahpouri2019multi}\cite{abdollahpouri2019multi} categorized different types of multi-stakeholder platforms and introduced several desired group fairness properties. \citeauthor{leonhardt2018user} \cite{leonhardt2018user} identified the unfairness issue for users in post-processing algorithms to improve the diversity in recommendation. \citeauthor{mehrotra2018towards}
\cite{mehrotra2018towards} proposed a heuristic strategy to jointly optimize fairness and performance in two-sided marketplace platforms. \citeauthor{beutel2019fairness} \cite{beutel2019fairness} considered fairness in recommendation under a pairwise comparative ranking framework, and offered a regularizer to improve fairness when training recommendation models. \citeauthor{patro2020fairrec} \cite{patro2020fairrec} explored individual fairness for both producers and customers for long-term sustainability of two-sided platforms. 
\citeauthor{fu2020fairness} \cite{fu2020fairness} impaired the group unfairness problem in the context of explainable recommendation \cite{zhang2014explicit,zhang2020explainable} over knowledge graphs; \citeauthor{li2021user} \cite{li2021user} considered user-oriented fairness in recommendation by requiring the active and inactive user groups be treated similarly; \citeauthor{ge2021towards} \cite{ge2021towards} proposed a reinforcement learning framework to deal with the changing group labels of items to achieve long-term fairness in recommendation. To the best of our knowledge, our work is the first to consider personalized and causal-based fairness in recommender systems.


\section{Preliminaries And Notations}
\label{sec:notations}

In this section, we introduce the preliminaries and notations used in this paper. Capital letters such as $Z$ denote variables, lowercase letters such as $z$ denote specific values of the variables. Bold capital letters such as $\textbf{Z}$ denote a set of variables, while bold lowercase letters such as $\textbf{z}$ denote a set of values. In the following, we first show the notations used in the recommendation task, and then we introduce the preliminaries about counterfactual fairness. 
\vspace{-10pt}

\subsection{Recommendation Task}

In recommendation task \cite{chen2021neural}, we have a user set $\mathbb{U}=\left\{u_{1}, u_{2}, \cdots, u_{n}\right\}$ and an item set $\mathbb{V}=\left\{v_{1}, v_{2}, \cdots, v_{m}\right\}$, where $n$ is the number of users and $m$ is the number of items. The user-item interaction histories are usually represented as a 0-1 matrix $H =\left[h_{i j}\right]_{n \times m}$, where each entry $h_{ij}=1$ if user $u_i$ has interacted with item $v_j$, otherwise $h_{ij}=0$. The key task for recommendations is to predict the preference scores of users over items, so that the model can recommend each user $u_{i}$ a top-$N$ recommendation list $\left\{v_{1}, v_{2}, \cdots, v_{N} | u_{i}\right\}$ according to the predicted scores. To learn the preference scores, modern recommender models are usually trained to learn the user and item representations based on the user-item interactions, and then take the representations as input to a learned or designed scoring functions to make recommendations. 
We use $\textbf{r}_u$ and $\textbf{r}_v$ to represent the learned vector embeddings for user $u$ and item $v$, and use $S_{uv}$ to denote the predicted preference score for a $(u,v)$ pair. In addition to the interaction records, users have their own features, such as gender, race, age, etc. In particular, we use $\textbf{Z}$ to represent the sensitive features, and use $\textbf{X}$ to denote all the remaining features which are not causally dependent on $\textbf{Z}$, i.e., the insensitive features. Without loss of generality, we suppose each user have $K$ categorical sensitive features $\{Z_1,Z_2,...,Z_K\}$.



\vspace{-15pt}
\subsection{Counterfactual}

Before introducing the definition of counterfactual fairness, we first briefly introduce the concept of counterfactual. To understand counterfactual, let us consider an example first \cite{pearl2016causal}. When Alice was driving home, she came to a fork in the road and had to make a choice: to take the street 1 ($X$ = 1) or to take the street 0 ($X$ = 0). Alice took the street~0 and it took her 2 hours to arrive home, and then she may ask ``how long would it take if I had taken the street 1 instead?'' Such a ``what if'' statement in which the ``if'' portion is unreal or unrealized, is known as a counterfactual \cite{pearl2016causal}. The ``if'' portion of a counterfactual is called the antecedent. We use counterfactual to compare two outcomes under the exact same condition, differing only in the antecedent. To solve the above counterfactual, we denote the driving time of the street 1 by $Y_{X=1}$ or $Y_1$, and the driving time of the street 0 by $Y_{X=0}$ or $Y_0$, then the quantity we want to estimate is $E(Y_{X=1}|X=0,Y=Y_{0}=2)$. The counterfactual can be solved based on structural causal model \cite{pearl2009causality}. As the assumption of causal models, the state of $Y$ will be fully determined by the background variables $U$ and the structural equations $F$. Specifically, given $U=u$ (which can be derived from the evidence of $X=0$ and $Y=2$), and an intervention on $X$ as $do(X=1)$,
we can derive the solution of the counterfactual. To make the notations clear, we use the expression $P(Y_{X=x^{\prime}} =y^\prime | X=x, Y=y)=P(y^\prime_{x^{\prime}} | x, y)$, which involves two worlds: the observed world where $X=x$ and $Y=y$ and the counterfactual world where $X=x^{\prime}$ and $Y=y^\prime$. The expression reads ``the probability of $Y=y^\prime$ had $X$ been $x^{\prime}$ given that we observed $Y=y$ and $X=x$''.


\subsection{Counterfactual Fairness}

Counterfactual fairness is an individual-level causal-based fairness notion \cite{kusner2017counterfactual}. It requires that for any possible individual, the predicted result of the learning system should be the same in the counterfactual world as in the real world. The counterfactual world here is the world in which we only make an intervention on user's sensitive features, while all other features of the user that are not dependent on the sensitive features are kept unchanged. The counterfactual fairness is an individual-level notion because it is conditioned on all the unchanged variables. Here is an example for counterfactual fairness: suppose we are designing a decision making system that does not discriminate against gender when deciding students' admission to college.
Counterfactual fairness requires that the admission result to a student will not be changed if his or her gender were reversed while all other features that are not dependent on gender remain the same, such as grades and recommendation letters. Such a fairness requirement is usually more reasonable than forcefully requiring the same admission rate for all genders in association-based notions, since students of different genders may have different preferences for college majors.

In this paper, we consider counterfactual fairness in recommendation scenario. We give the definition of counterfactually fair recommendation as follows.

\begin{definition}[Counterfactually fair recommendation]\label{counterfactual fairness}
A recommender model is counterfactually fair if for any possible user $u$ with features $\textbf{X}=\textbf{x}$ and $\textbf{Z}=\textbf{z}$:
$$
P\left(L_{\textbf{z}} \mid \textbf{X}=\textbf{x}, \textbf{Z}=\textbf{z}\right)=P\left(L_{\textbf{z}^{\prime}} \mid \textbf{X}=\textbf{x}
, \textbf{Z}=\textbf{z}\right)
$$
for all $L$ and for any value $\textbf{z}^{\prime}$ attainable by $\textbf{Z}$, where $L$ denotes the Top-N recommendation list for user $u$.
\end{definition}

Here $\textbf{Z}$ are the user's sensitive features and $\textbf{X}$ are the features that are not causally dependent on \textbf{Z}. This definition requires that for any possible user, sensitive features \textbf{Z} should not be a cause of the recommendation results. Specifically, for a given user $u$, the distribution of the generated  recommendation results $L$ for $u$ should be the same if we only change $\textbf{Z}$ from $\textbf{z}$ to $\textbf{z}^{\prime}$, while holding the remaining features $\textbf{X}$ unchanged.

\section{Fair Recommendation Framework}
\label{sec:framework}

In this section, we introduce the framework for generating counterfactually fair recommendations.

\subsection{Problem Formulation}

As discussed above, we aim at achieving counterfactual fairness in recommendations.
From Definition~\ref{counterfactual fairness}, we can see that counterfactual fairness requires that the generated recommendation results $L$ are independent from the user sensitive features $\textbf{Z}$. As stated in \cite{kusner2017counterfactual}, which considers counterfactual fairness in classification tasks, the most straightforward way to guarantee the independence between predicted outcomes and sensitive features is just avoiding from using sensitive features (and the features causally depend on the sensitive features) as input. However, this is not the case in recommendation scenarios. 
Most of the Collaborative Filtering (CF) \cite{goldberg1992using,ekstrand2011collaborative} or Collaborative Reasoning (CR) \cite{chen2021neural,shi2020neural} recommender systems are directly trained from user-item interaction history, and content-based recommendation models \cite{adomavicius2011context,lops2011content,zhang2017joint} and hybrid models \cite{burke2002hybrid} may use user profiles as input or use additional information to help train the model. However, no matter if the model directly uses user 
features as input or not, the model may generate unfair recommendations on some user features. The reason is that by collaborative learning (CF or CR) in the training data, the model may capture the relevance between user features and user behaviours that are inherently encoded into the training data, since user features may have causal impacts on user behaviors and preferences.
As a result, we need to design methods to achieve counterfactually fair recommendations as it cannot be realized in trivial way.

\begin{figure}[t]
    \centering
    \includegraphics[scale=0.5]{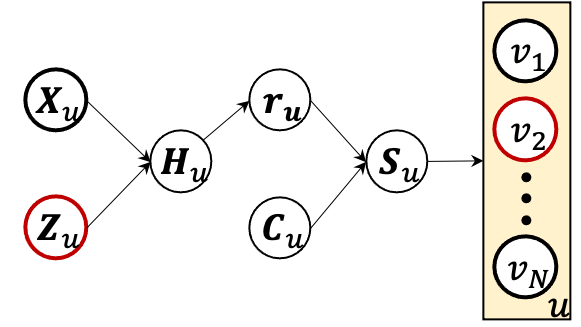}
    \caption{Causal relations for general recommendation models. For a given user $u$, $\textbf{X}_u$ and $\textbf{Z}_u$ are insensitive and sensitive features of $u$, respectively. $\textbf{H}_u$ is the user interaction history. $\textbf{r}_u$ is the user embedding. $\textbf{C}_u$ is the candidate item set for $u$. $\textbf{S}_u$ are the predicted scores over the candidate items. The red circled nodes are used to emphasize the impact of the sensitive features on the final recommendation list.}
    \label{fig2}
    \vspace{-5pt}
\end{figure}

To guarantee that recommendation results are independent from user sensitive features, we only need to require that given a user $u$, for any item $v\in \mathcal{V}$, the predicted score $S_{uv}$ for the user-item pair $(u,v)$ is independent from the user sensitive features $\textbf{Z}$. 
As shown in Figure~\ref{fig2}, which represents the causal relations for general recommendation models, for a given user $u$, the scoring function $\textbf{S}_u$ usually takes user embedding $\textbf{r}_u$ and candidate item embeddings $\textbf{C}_u$ as input to generate the recommendation list. However, the user embedding $\textbf{r}_u$, which is learned from user histories $\textbf{H}_u$, may depend on the user features $\textbf{X}_u$ and $\textbf{Z}_u$ since the features causally impact user behaviours. Therefore, as shown by the causal path from sensitive feature $\textbf{Z}_u$ to the final recommendation result, we only need to ensure the independence between user embedding $\textbf{r}_u$ and the sensitive feature $\textbf{Z}_u$ to meet the counterfactual fairness requirement, i.e., for all $ u \in \mathcal{U}$, we need to guarantee $\textbf{r}_u \perp \textbf{Z}_{u}$.

Besides, to meet users' personalized demands on fairness, we allow each user to select a set of sensitive features that they care about, and we generate fair recommendations in terms of these features. Suppose user $u$ selected a set of sensitive features $\textbf{Q}_u \subseteq\{1, \ldots, K\},$ then we need to guarantee that $\textbf{r}_{u} \perp z_{u}^k$, for all $k\in\textbf{Q}_u$.




\subsection{The Model}
In this section, we introduce the model to generate feature independent user embeddings through adversary learning. 
The main idea is to train a predictor and an adversarial classifier simultaneously, where the predictor aims at learning informative representations for the recommendation task, while the adversarial classifier aims at minimizing the predictor’s ability to predict the protected features from the learned representation, and thus the information about sensitive features are removed from the representations to mitigate discrimination \cite{xie2017controllable, beutel2017data, elazar2018adversarial, wang2019balanced,bose2019compositional,arduini2020adversarial,du2020fairness}. Following the adversary learning setup \cite{goodfellow2014generative,bose2019compositional,arduini2020adversarial}, we develop an adversary network that consists two modules: a filter module that aims at filtering out the information about sensitive features from user embeddings, and a discriminator module that aims to predict the sensitive features from the learned user embeddings. Figure~\ref{fig3} shows the architecture of our method.

\begin{figure}[t]
    \centering
    \includegraphics[scale=0.7]{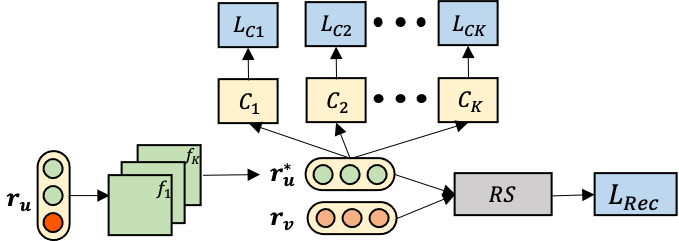}
    \caption{The architecture of our framework. For a given user $u$ with the original representation $\textbf{r}_u$, we first use the filter module to remove the information about sensitive features and get the filtered embedding $\textbf{r}_u^*$. Then we use the corresponding classifier $C_k$ to predict the $k$-th sensitive feature from the filtered embedding, and on the other hand, we train a recommender system (RS) for the main task. The loss of recommendation and classification are optimized together.}
    \label{fig3}
    \vspace{-10pt}
\end{figure}

\subsubsection{\textbf{Filter Module}}

Given a learning algorithm that learns user embedding $\textbf{r}_u$ to generate recommendations for user $u$, we require the embedding $\textbf{r}_u$ to be independent from certain user features to achieve counterfactual fairness. Therefore, 
we first introduce a filter module with a set of filter functions, which are used to filter out the information about certain sensitive features in the user embeddings. We denote the filter function as $f: \mathbb{R}^{d} \mapsto \mathbb{R}^{d}$, and the filtered embedding $f(\textbf{r}_u)$ is independent from certain sensitive features while maintaining other insensitive information of the user. To meet users' personalized fairness demands, we allow each user to select a set of sensitive features $\textbf{Q}_u \subseteq\{1, \ldots, K\}$. To achieve personalized fairness in recommendation, we provide two methods as follows.

In most recommendation scenarios, such as in movie, music, and e-commerce recommendation, users are not willing to share too much personalized information about themselves with the system, and they may only select a few sensitive features for generating fair recommendations, i.e., $K$ will be a very small number. In this scenario, a straightforward way is to train one filter function for each potential combination of the sensitive features. For example, if $K=2$, and $\textbf{Q}_u$ contains two sensitive features \textit{Age} and \textit{Gender}, then we need to train filter functions $f_A, f_G, f_{A,G}$ to remove the sensitive information of \textit{Age}, \textit{Gender}, and both \textit{Age} and \textit{Gender}, from the user embeddings, respectively. We call this method the Separate Method (SM). The architecture of separate method of this example is show in Figure.\ref{fig:pic2}.  We denote the filtered embedding of user $u$ in terms of the selected sensitive feature set $\textbf{Q}_u$ as follows.
$$
\textbf{r}_u^*=f_{\textbf{Q}_u}(\textbf{r}_u)
$$

\begin{figure}[t]
    \centering
    \includegraphics[scale=0.75]{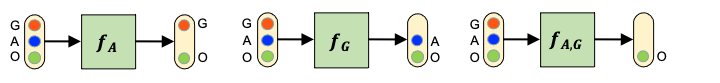}
    \caption{The architecture of Separate Method. G represents \textit{Gender}; A represents \textit{Age}; O represents \textit{Others}.}
    \vspace{-5pt}
    \label{fig:pic2}
\end{figure}

\begin{figure}[t]
    \centering
    \includegraphics[scale=0.75]{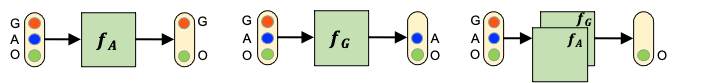}
    \caption{The architecture of Combination Method. G represents \textit{Gender}; A represents \textit{Age}; O represents \textit{Others}.}
    \vspace{-10pt}
    \label{fig:pic3}
\end{figure}

However, in some cases such as social network recommendation, users may have many sensitive features to consider, and the potential combinations of sensitive features will be quite a lot. Under such scenarios, training one filter for each combination is infeasible due to the exponential number of combinations. Therefore, we introduce the Combination Method (CM) to achieve personalized fairness in recommendation. Specifically, we train one filter function for each sensitive feature. The filter functions $\{f_1,f_2,...,f_K\}$ correspond to sensitive features $\{Z_1,Z_2,...,Z_K\}$, where $f_k$ is trained to filter the information about $Z_k$. Considering the above example where $\textbf{Q}_u$ contains two sensitive features \textit{Age} and \textit{Gender}, the architecture of combination method is shown in Figure.\ref{fig:pic3}. To 
generate the feature independent embedding of user $u$ with $\textbf{Q}_u=\{q_1,q_2,\ldots,q_{|\textbf{Q}_u|}\}$, we can simply combine the $|\textbf{Q}_u|$ filtered embeddings as:
$$
\textbf{r}_u^*=G(f_{q_1}(\textbf{r}_u),f_{q_2}(\textbf{r}_u),\ldots,f_{q_{|\textbf{Q}_u|}}(\textbf{r}_u))
$$
where $G$ is a combination function that takes the $|\textbf{Q}_u|$ filtered embeddings as input, and outputs the embedding which is independent from all the features in $\textbf{Q}_u$ without changing the embedding dimension. For example, we can simply use the average of the $|\textbf{Q}_u|$ filtered embeddings as $G$. We will compare the performance of the SM and CM methods in the experiment section. 


Furthermore, for non-personalized situation where the fairness demands of users are the same---e.g., all the users ask for fair recommendations on race---we can simply train one filter function corresponding to the sensitive features.

\subsubsection{\textbf{Discriminator Module}}
To learn filter functions, we use the idea of adversary learning to train a set of discriminators. Specifically, for each sensitive feature $Z_k$, we train a classifier $C_{k}: \mathbb{R}^{d} \mapsto$ [0,1], which attempts to predict $Z_k$ from the user embeddings. The goal of the filter functions is to make it hard for the classifiers to predict the sensitive features from the user embeddings, while the goal of the discriminators is to fail the filter functions. Concretely, the training process tries to jointly optimize both goals.

\subsubsection{\textbf{Adversary Training}}

We use $\mathcal{L}_{Rec}$ to denote the loss of the recommendation task. Depending on the recommendation model, $\mathcal{L}_{Rec}$ can be the pair-wise ranking loss \cite{rendle2012bpr} or mean square error loss \cite{koren2009matrix}, etc. We use $\mathcal{L}_C$ to denote the loss of the discriminators, i.e., the loss of the classification task, which is a cross-entropy loss in our implementation. We thus define our adversary learning loss as follows:
\begin{equation}\label{eq}
\mathcal{L}=\sum_{u, v, \textbf{Q}_u}\left(\mathcal{L}_{\text {Rec}}(u,v,\textbf{Q}_u)
-\lambda \sum_{k \in \textbf{Q}_u} \sum_{z_{k} \in Z_{k}} \mathcal{L}_C \left(\textbf{r}_u^*, z_{k}\right)\right)
\end{equation}
where adversarial coefficient $\lambda$ controls the trade-off between recommendation performance and fairness. We study the influence of $\lambda$ in the ablation study section. The adversary learning algorithm is shown in Algorithm \ref{alg}.

We also provide the following theorem to show the theoretical guarantee that the adversary training procedure can make the filtered user embeddings independent from the sensitive features.

\begin{theorem}
\label{theorem1}
If (1) the filter functions and discriminators are implemented with sufficient capacity, and (2) at each step of Algorithm~\ref{alg}, the discriminators are allowed to reach their optimum given the filter functions, and (3) the filter functions are optimized according to the loss function with discriminators fixed, then we have for any user $u$, $\textbf{r}_u \perp \textbf{Z}_{u}$ as $\lambda\rightarrow \infty$. 
\end{theorem}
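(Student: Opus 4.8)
The plan is to mirror the classical minimax argument of Goodfellow et al.\ for GANs, adapted to our adversarial filtering setup. First I would fix a user $u$ and a single sensitive feature $Z_k$, and reason about the inner maximization over the discriminator $C_k$. For a fixed filter function $f_k$, the filtered embedding $\textbf{r}_u^* = f_k(\textbf{r}_u)$ induces a distribution, and the cross-entropy loss $\mathcal{L}_C$ is minimized (over the classifier's predictions) precisely by the true posterior $C_k^*(\textbf{r}^*) = P(Z_k = z_k \mid \textbf{r}_u^* = \textbf{r}^*)$; this is where assumption (1), sufficient capacity, and assumption (2), the discriminator reaching its optimum, are used. Plugging the optimal classifier back into the adversarial term, the quantity the filter must contend with becomes (up to constants) the negative conditional entropy, or equivalently a sum of Kullback--Leibler-type divergences between the conditional distributions $P(\textbf{r}_u^* \mid Z_k = z_k)$ and the marginal $P(\textbf{r}_u^*)$ — i.e.\ exactly the mutual information $I(\textbf{r}_u^*; Z_k)$.

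Next I would analyze the outer minimization over the filter under assumption (3), with the discriminator held at its optimum. The filter's objective from Equation~\eqref{eq} is $\mathcal{L}_{\text{Rec}} - \lambda \sum_k \mathcal{L}_C$, and substituting the optimal discriminator turns the $-\lambda \sum_k \mathcal{L}_C$ part into $+\lambda \sum_k I(\textbf{r}_u^*; Z_k)$ (again up to additive constants independent of the filter). As $\lambda \to \infty$, any filter that does not drive each $I(\textbf{r}_u^*; Z_k)$ to zero incurs unbounded loss, whereas a constant-output filter achieves finite loss; hence in the limit the minimizer must satisfy $I(\textbf{r}_u^*; Z_k) = 0$ for every $k$, and since mutual information vanishes if and only if the variables are independent, we get $\textbf{r}_u^* \perp Z_k$. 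Doing this for each $k \in \textbf{Q}_u$ and noting that the combination function $G$ (or the single filter in SM) is a deterministic function of independent-of-$Z_k$ quantities gives $\textbf{r}_u^* \perp \textbf{Z}_u$, which by the causal analysis around Figure~\ref{fig2} is the condition we wanted. I would also remark that the recommendation loss term stays bounded as the filter ranges over feasible functions, so it does not interfere with the limiting argument — it only affects which of the many independence-achieving filters is selected for finite $\lambda$.

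The main obstacle I expect is making the ``sufficient capacity'' idealization precise enough that the interchange of limits and optima is legitimate: we are simultaneously taking $\lambda \to \infty$ and assuming the inner discriminator is always at its exact optimum, so one has to argue that the sequence of filter optima (indexed by $\lambda$) actually converges to a feasible independent filter rather than escaping to the boundary of the function class or failing to exist. The cleanest way around this is to state the result at the level of the population objective: define $V(\lambda)$ as the optimal value of the min-max game, show the optimal filter value is $\Theta(\lambda)$ times $\sum_k I(\textbf{r}_u^*;Z_k)$ plus bounded terms, and conclude that any accumulation point of the filter optima must zero out every mutual information term. A secondary subtlety is that $\textbf{r}_u^* \perp Z_k$ for each $k$ individually does not in general imply joint independence from the tuple $\textbf{Z}_u$; here I would lean on the fact that in SM a single filter $f_{\textbf{Q}_u}$ is trained against all selected features jointly (so the discriminators directly target the joint feature), and for CM note that the theorem as stated concerns the embeddings produced after filtering against each feature, so the independence claim should be read per-feature as in the problem formulation's requirement $\textbf{r}_u \perp z_u^k$ for all $k \in \textbf{Q}_u$.
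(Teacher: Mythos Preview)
Your proposal is correct and follows the same fundamental GAN-style minimax reasoning as the paper. The paper's actual proof is considerably terser: it restricts to a single binary sensitive feature, observes that the discriminator loss then coincides with the original GAN objective, and invokes Proposition~2 of Goodfellow et al.\ verbatim to conclude $P(\textbf{r}_u^*\mid Z=0)=P(\textbf{r}_u^*\mid Z=1)$, hence independence; the multi-class and multi-feature extension is dispatched in one sentence by citing \cite{cho2020fair} for a mutual-information argument. Your route is the same idea unpacked: rather than treating the GAN convergence result as a black box, you explicitly identify the optimal discriminator as the posterior $P(Z_k\mid \textbf{r}_u^*)$, substitute it back to expose the mutual information $I(\textbf{r}_u^*;Z_k)$ in the outer objective, and then let $\lambda\to\infty$ force each $I$ to zero. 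This buys you a self-contained argument that makes the role of $\lambda$ transparent and handles the multi-class case directly without a separate citation, at the cost of being longer; the paper's version is a sketch that leans on external results. Your remarks on the marginal-versus-joint independence subtlety and on making the interchange of limits and optima rigorous go beyond anything the paper addresses; the paper simply does not engage with those points.
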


\begin{proof}
For simplicity, we consider the case of a single binary sensitive feature, i.e., we have one sensitive feature $Z$ which can be 0 or 1. In this case, the loss of the discriminators in Eq.\eqref{eq}---which will be a binary cross-entropy loss---is the same as the loss of Generative Adverserisal Networks (GAN) \cite{goodfellow2014generative}. According to the Proposition 2 of \cite{goodfellow2014generative}, it has been proven that if (1) the generator and discriminator have enough capacity, (2) the discriminator is allowed to reach the optimum during the training process, and (3) the generator is updated with the discriminator fixed so as to improve the criterion, then the distribution of the fake data will converge to the distribution of the real data. As a result, when classifying a binary sensitive feature, we have that the distributions of user embeddings with sensitive feature $Z=0$ and $Z=1$ will be the same once converged, i.e., we have $P(\textbf{r}_u^*|Z=0)=P(\textbf{r}_u^*|Z=1)$, which gives $\textbf{r}_u^* \perp Z$.
\end{proof}

The theoretical intuition above can be generalized to the multi-class and multi-feature settings. \citeauthor{cho2020fair} \cite{cho2020fair} have theoretically shown how to optimize the independence between predictions and sensitive features in multi-settings from a mutual information perspective, which leads to the same result in Theorem \ref{theorem1}.

\subsection{Training Algorithm}

For adversary learning, we adopt mini-batch training in our implementation. Specifically, for each batch, we first feed the input to the model to obtain $\mathcal{L}_{Rec}$ and $\mathcal{L}_{C}$, and then we fix the parameters in the discriminator and optimize the recommendation model as well as the corresponding filter functions by minimizing $\mathcal{L}$. After that, the parameters of the recommendation model and filter functions are fixed, and $\mathcal{L}_C$ is minimized for $t$ steps. Here $t=10$ in our implementation. 

To achieve personalized fairness, we allow each user to select a set of concerned sensitive features $\textbf{Q}_u \subseteq\{1, \ldots, K\}.$ In implementation of the learning algorithm, we sample a binary mask for each batch to determine $\textbf{Q}_u$ to train the filtered embedding $\textbf{r}_u^*$. Specifically, the binary mask is sampled from $K$ independent Bernoulli distributions with the probability $p=0.5$ under the assumption that there is no causal relation between sensitive features when user selects them. When there is a need to consider the dependence between sensitive features, we can simply apply other distributions based on different applications. Again, the pseudo-code for the entire training algorithm is given in Algorithm \ref{alg}.

\begin{algorithm}[t]\label{alg}
\caption{Adversarial Training Algorithm}
\label{alg}
\SetKwInOut{Input}{Input}
\SetKwInOut{Output}{Output}
\SetAlgoLined
\Input{Training user set $\mathcal{U}$; item set $\mathcal{V}$; Recommendation model RS; Filter functions $\mathcal{F}$; Discriminators $\mathcal{C}$; Sensitive features $\textbf{Z}$; Training epochs $M$; Discriminator training steps $T$; Adversarial coefficient $\lambda$;\\}
\BlankLine
Initialize: user embeddings $\mathbf{r}_u, \forall u\in \mathcal{U}$, item embeddings $\mathbf{r}_v, \forall v\in \mathcal{V}$\;
\For{$epoch \leftarrow 1$ \KwTo $M$} {
    \For{$u \in \mathcal{U}$, $v \in \mathcal{V}$} {
        $\mathbf{Q} \leftarrow$ sample binary filter mask\;
        \If{Separate Method} {
            $F \leftarrow$ get filter function $f_{\textbf{Q}}$ from $\mathcal{F}$\;
        } 
        \If {Combination Method} {$F \leftarrow$ get filter functions $\{f_k\}_{k\in \textbf{Q}}$ from $\mathcal{F}$\;
        }
        $\{C_k\}_{k\in \textbf{Q}} \leftarrow$ get discriminators from $\mathcal{C}$\;
        $\{z_k\}_{k\in \textbf{Q}} \leftarrow$ get feature values from $\textbf{Z}$\;
        
        $\textbf{r}_u^* \leftarrow F(\textbf{r}_u)$\Comment{obtain filtered user embedding}\;
        $\mathcal{L}_{Rec} \leftarrow$ RS($\mathbf{r}_u^*, \mathbf{r}_v$)\;
        $\mathcal{L}_C \leftarrow \sum_{k\in \textbf{Q}}C_k(\mathbf{r}_u^*, z_k)$ 
        
        $\mathcal{L} \leftarrow \mathcal{L}_{Rec} + \lambda \mathcal{L}_C$\;
        Optimize $\mathcal{L}$ $w.r.t$ $\mathbf{r}_u, \mathbf{r}_v, F, \mathrm{RS}$, with $\{C_k\}_{k\in \textbf{Q}}$ fixed\;
        \For{$t\leftarrow 1$ \KwTo $T$} {
            $\mathcal{L}_C\leftarrow \sum_{k\in \textbf{Q}}C_k(\mathbf{r}_u^*, z_k)$\;
            Optimize $\mathcal{L}_C$ $w.r.t$ $\{C_k\}_{k\in \textbf{Q}}$ with $\mathbf{r}_u, F$ fixed\;
        }
    }
}

\end{algorithm}

\section{Experiments}
\label{sec:experiments}

In this section, we first briefly introduce the datasets, baselines and experimental setup used for the experiments. Then we show and analyze the main experimental results, including the comparison of recommendation performance and fairness between the baseline models and the two fairness models. Finally, we conduct ablation studies to further analyze the algorithm. 

\subsection{\textbf{Dataset Description}}

To evaluate the models under different data scales, data sparsity and application scenarios, we perform experiments on a movie recommendation dataset and an insurance recommendation dataset, which are two real-world and publicly available datasets. 

\subsubsection*{\normalfont \textbf{MovieLens\footnote{https://grouplens.org/datasets/movielens/1m/}}}We use the MovieLens-1M dataset which contains user-item interactions and user profile information for movie recommendation. We select \textit{gender}, \textit{age} and \textit{occupation} as user sensitive features, where \textit{gender} is a binary feature, \textit{occupation} is a 21-class feature, and for \textit{age}, users are assigned into 7 groups based on their age range. 

\subsubsection*{\normalfont \textbf{Insurance}\footnote{https://www.kaggle.com/mrmorj/insurance-recommendation}}This is a Kaggle dataset with the goal of recommending insurance products to a target user. For each user, we select \textit{gender}, \textit{marital\_status} and \textit{occupation} as sensitive features, where \textit{gender} is still a binary feature. For $\textit{marital\_status}$ and \textit{occupation}, we group up the minority classes to transform them into 3-class features due to the severe data imbalance over classes. To guarantee the data quality for training models, we filter out the users with less than 4 interactions to make a denser dataset.

The statistics of the datasets are summarized in Table~\ref{tab:freq}. In our experiments, we split each dataset into train ($80\%$), validation ($10\%$) and test sets ($10\%$) and all the baseline models share these datasets for training and evaluation.

\subsection{\textbf{Evaluation Methods}}
We consider standard metrics Normalized Discounted Cumulative Gain ($\mathrm{NDCG}@N$) and Hit rate ($\mathrm{Hit}@ N$) scores to evaluate the top-$N$ recommendation quality. For the MovieLens dataset, we report NDCG@5 and Hit@5. For the Insurance dataset, we show NDCG@3 and Hit@3 scores due to the limited number of candidates in this dataset. For efficiency consideration, we use sampled negative interactions for evaluation instead of computing the user-item pair scores for each user over the entire item space \cite{zhao2020revisiting}. For each user, we randomly select 100 negative samples that the user has never interacted with. These negative items are put together with the positive item in the validation or test set to constitute the user's candidates list. Then we compute the metric scores over this candidates list to evaluate the recommendation model's top-$N$ ranking performance. The result of all metrics in our experiments are averaged over all users.

Following the settings in learning fair representations by adversary learning such as \cite{elazar2018adversarial,bose2019compositional,arduini2020adversarial}, to evaluate the effectiveness of discriminators, we train a set of attackers which have totally the same structure and capacity as the discriminators. Specifically, after we finish training the main algorithm, we input the filtered user embeddings and their corresponding sensitive labels to the attackers so as to train them to classify the sensitive features from the filtered embeddings. Just as the discriminators, we train one attacker for each sensitive feature. If the attackers can distinguish sensitive features from the user embeddings, then we say that sensitive features are leaked into the user embeddings, thus the recommendation model is not counterfactually fair. 

For training and evaluating attackers, we split the data into train (80\%) and test sets (20\%). We report AUC score for each attacker to show if the filtered user embeddings can be classified correctly by the attacker. For multi-class evaluation, we calculate the AUC score for all the combination of feature pairs and apply their macro-average to make the result insensitive to imbalanced data. The AUC score falls into the range of [0.5,1], the lower the better. An ideal result to meet the counterfactual fairness requirement is an AUC score of about 0.5, which means the attacker cannot guess the sensitive feature out of the user embeddings at all.

\subsection{Baselines}
In order to evaluate the effectiveness of our proposed framework, we apply our method over both shallow and deep recommendation models. We introduce the baseline models as follows:
\begin{itemize}
    \item \textbf{PMF} \cite{mnih2008probabilistic}: The Probabilistic Matrix Factorization algorithm by adding Gaussian prior into the user and item latent factor distributions for matrix factorization. 
    \item \textbf{BiasedMF} \cite{koren2009matrix}: A matrix factorization algorithm which takes user and item latent factors as well as the global bias terms into consideration. 
    \item \textbf{DeepModel} \cite{cheng2016wide}: This algorithm applies deep neural network with non-linear activation functions to train a user and item matching function.
    \item \textbf{DMF} \cite{xue2017deep}: Deep Matrix Factorization is a deep model for recommendation, which uses multi-layer perceptron with non-linear activation function to encode the raw user-item interaction matrix into dense latent factor representations.
\end{itemize}

\begin{table}[t!]
  \caption{Statistics of the datasets}
  \label{tab:freq}
  \begin{tabular}{lrrrr}
    \toprule
    Dataset& \#Interactions & \#Users & \#Items & Sparsity  \\
    \midrule
     MovieLens & 1,000,209& 6,040& 3,952&95.81\%\\
     Insurance &5,382 &1,231 &21 &79.18\%\\
  \bottomrule
 \label{data}
\end{tabular}
\vspace{-10pt}
\end{table}

\begin{table}[t!]
\setlength{\tabcolsep}{2pt}
  \centering
  \caption{AUC scores of all attackers on the MolveLens and Insurance datasets. G, A, O represent $gender$, $age$ and $occupation$ respectively on MovieLens; while G, M, O represent $gender$, $marital\_status$ and $occupation$ respectively on Insurance. The best results are highlighted in bold.}
  \resizebox{1\linewidth}{!}{
    \begin{tabular}{clrrrrrr}
    \toprule
          \multirow{2}{*}{}& & \multicolumn{3}{c}{MoiveLens} & \multicolumn{3}{c}{Insurance} \\
          \cmidrule(lr){3-5}
          \cmidrule(lr){6-8}
          &  &AUC-G &AUC-A &AUC-O &AUC-G &AUC-M &AUC-O\\
    \midrule
    \multirow{3}[2]{*}{PMF} & Orig. & 0.7697 & 0.8428 & 0.6024 & 0.6253 & 0.7098 & 0.6577 \\
          & SM    & \textbf{0.5389} & \textbf{0.5560} & \textbf{0.5289} & \textbf{0.5340} & \textbf{0.5377} & \textbf{0.5492} \\
          & CM    & 0.5532 & 0.5951 & 0.5396 & 0.5419 & 0.5789 & 0.5540 \\
    \midrule
    \multirow{3}[2]{*}{BiasedMF} & Orig. & 0.7870 & 0.8403 & 0.6064 & 0.6183 & 0.7715 & 0.6357 \\
          & SM    & \textbf{0.5345} & \textbf{0.5601} & \textbf{0.5258} & \textbf{0.5000 }  & \textbf{0.5405} & \textbf{0.5555 }\\
          & CM    & 0.5519 & 0.5757 & 0.5300  & 0.5491 & 0.5430 & 0.5717 \\
    \midrule
    \multirow{3}[2]{*}{DeepModel} & Orig. & 0.7165 & 0.7571 & 0.5481 & 0.5952 & 0.6339 & 0.6086 \\
          & SM    & 0.5545 & \textbf{0.5833} & 0.5445 & \textbf{0.5202} & \textbf{0.5687} & 0.5815 \\
          & CM    & \textbf{0.5371} & 0.6075 & \textbf{0.5247} & 0.5335 & 0.5765 & \textbf{0.5407} \\
    \midrule
    \multirow{3}[2]{*}{DMF} & Orig. & 0.7049 & 0.7238 & 0.5710 & 0.6172 & 0.6309 & 0.6023 \\
          & SM    & 0.6073 & 0.5670 & 0.5289 & 0.5421 & \textbf{0.5638} & \textbf{0.5653}\\
          & CM    & \textbf{0.5000}   & \textbf{0.5297} & \textbf{0.5120} & \textbf{0.5258} & 0.5873 & 0.5791 \\
    \bottomrule
    \end{tabular}%
    }
  \label{auc}%
  \vspace{-10pt}
\end{table}%

\begin{table*}[t!]
\setlength{\tabcolsep}{2pt}
  \centering
  \caption{The recommendation performance of baselines, Separate Method (SM) and Combination Method (CM) on MovieLens. Orig. represents the baseline model; G, A, O represent $gender$, $age$ and $occupation$, respectively. The performance of SM and CM are evaluated for all combinations of the three sensitive features. For example, SM-G represents the performance of filtering user embeddings by $f_{gender}$ trained via SM. The better results between SM and CM are highlighted in bold.}
    \resizebox{1\textwidth}{!}{
    \begin{tabular}{clrrrrrrrrrrrrrrr}
    \toprule
          &       & \multicolumn{1}{c}{Orig.} & \multicolumn{1}{c}{SM-G} & \multicolumn{1}{c}{CM-G} & \multicolumn{1}{c}{SM-A} & \multicolumn{1}{c}{CM-A} & \multicolumn{1}{c}{SM-O} & \multicolumn{1}{c}{CM-O} & \multicolumn{1}{c}{SM-GA} & \multicolumn{1}{c}{CM-GA} & \multicolumn{1}{c}{SM-GO} & \multicolumn{1}{c}{CM-GO} & \multicolumn{1}{c}{SM-AO} & \multicolumn{1}{c}{CM-AO} & \multicolumn{1}{c}{SM-GAO} & \multicolumn{1}{c}{CM-GAO} \\
    \midrule
    \multirow{2}[1]{*}{PMF} & N@5 & 0.4961 & \textbf{0.4801} & 0.4604 & \textbf{0.4781} & 0.4596 & \textbf{0.4751} & 0.4605 & \textbf{0.4730} & 0.4673 & \textbf{0.4737} & 0.4685 & 0.4674 & \textbf{0.4681} & 0.4599 & \textbf{0.4705} \\
          & H@5 & 0.6493 & \textbf{0.6342} & 0.6179 & \textbf{0.6318} & 0.6188 & \textbf{0.6291} & 0.6191 & \textbf{0.6273} & 0.6251 & \textbf{0.6282} & 0.6274 & 0.6207 & \textbf{0.6265} & 0.6165 & \textbf{0.6289} \\
\midrule
    \multirow{2}[1]{*}{BiasedMF} & N@5 & 0.4960 & \textbf{0.4776} & 0.4649 & \textbf{0.4740} & 0.4665 & \textbf{0.4748} & 0.4672 & 0.4710 & \textbf{0.4732} & 0.4699 & \textbf{0.4742} & 0.4672 & \textbf{0.4744} & 0.4573 & \textbf{0.4767} \\
          & H@5 & 0.6471 & \textbf{0.6305} & 0.6205 & \textbf{0.6270} & 0.6220 & \textbf{0.6285} & 0.6233 & 0.6248 & \textbf{0.6291} & 0.6240 & \textbf{0.6305} & 0.6208 & \textbf{0.6303} & 0.6118 & \textbf{0.6324} \\
    \midrule
    \multirow{2}[2]{*}{DeepModel} & N@5 & 0.3935 & \textbf{0.3834} & 0.3803 & \textbf{0.3827} & 0.3793 & \textbf{0.3825} & 0.3790 & \textbf{0.3819} & 0.3809 & \textbf{0.3820} & 0.3808 & 0.3797 & \textbf{0.3800} & 0.3782 & \textbf{0.3808} \\
          & H@5 & 0.5501 & \textbf{0.5370} & 0.5338 & \textbf{0.5357} & 0.5325 & \textbf{0.5357} & 0.5325 & \textbf{0.5349} & 0.5343 & \textbf{0.5350} & 0.5343 & 0.5322 & \textbf{0.5337} & 0.5311 & \textbf{0.5344} \\
    \midrule
    \multirow{2}[2]{*}{DMF} & N@5 & 0.3307 & \textbf{0.3262} & 0.3167 & \textbf{0.3256} & 0.3168 & \textbf{0.3260} & 0.3166 & \textbf{0.3254} & 0.3177 & \textbf{0.3267} & 0.3169 & \textbf{0.3253} & 0.3164 & \textbf{0.3263} & 0.3183 \\
          & H@5 & 0.4795 & \textbf{0.4731} & 0.4598 & \textbf{0.4709} & 0.4588 & \textbf{0.4731} & 0.4603 & \textbf{0.4707} & 0.4606 & \textbf{0.4714} & 0.4599 & \textbf{0.4714} & 0.4603 & \textbf{0.4732} & 0.4622 \\
    \bottomrule
    \end{tabular}%
    }
  \label{tb:movie_result}%
\end{table*}%

\begin{table*}[t!]
\setlength{\tabcolsep}{2pt}
  \centering
  \caption{The recommendation performance of baselines, Separate Method (SM) and Combination Method (CM) on Insurance. Orig. represents the baseline model; G, M, O represent $gender$, $marital\_status$ and $occupation$, respectively. The performance of SM and CM are evaluated for all combinations of the three sensitive features. For example, SM-G represents the performance of filtering user embeddings by $f_{gender}$ trained via SM. The better results between SM and CM are highlighted in bold.}
  \resizebox{1\textwidth}{!}{

    \begin{tabular}{clrrrrrrrrrrrrrrr}
    \toprule
          &       & \multicolumn{1}{c}{Orig.} & \multicolumn{1}{c}{SM-G} & \multicolumn{1}{c}{CM-G} & \multicolumn{1}{c}{SM-M} & \multicolumn{1}{c}{CM-M} & \multicolumn{1}{c}{SM-O} & \multicolumn{1}{c}{CM-O} & \multicolumn{1}{c}{SM-GM} & \multicolumn{1}{c}{CM-GM} & \multicolumn{1}{c}{SM-GO} & \multicolumn{1}{c}{CM-GO} & \multicolumn{1}{c}{SM-MO} & \multicolumn{1}{c}{CM-MO} & \multicolumn{1}{c}{SM-GMO} & \multicolumn{1}{c}{CM-GMO} \\
    \midrule
    \multirow{2}[1]{*}{PMF} & N@3 & 0.6518 & \textbf{0.6528} & 0.6208 & \textbf{0.6521} & 0.6081 & \textbf{0.6406} & 0.6208 & \textbf{0.6242} & 0.6173 & \textbf{0.6535} & 0.6208 & \textbf{0.6228} & 0.6163 & \textbf{0.6390} & 0.6191 \\
          & H@3 & 0.7528 & \textbf{0.7398} & 0.7026 & \textbf{0.7398} & 0.6914 & \textbf{0.7416} & 0.7026 & \textbf{0.7323} & 0.6989 & \textbf{0.7398} & 0.7026 & \textbf{0.7435} & 0.6989 & \textbf{0.7305} & 0.7007 \\
    \midrule
    \multirow{2}[1]{*}{BiasedMF} & N@3 & 0.6209 & 0.5936 & \textbf{0.6095} & \textbf{0.6190} & 0.5995 & 0.6031 & \textbf{0.6112} & 0.5936 & \textbf{0.6079} & 0.5991 & \textbf{0.6116} & 0.6041 & \textbf{0.6056} & 0.6028 & \textbf{0.6095} \\
          & H@3 & 0.7082 & 0.6803 & \textbf{0.6952} & \textbf{0.7100} & 0.6822 & 0.6877 & \textbf{0.6952} & 0.6803 & \textbf{0.6952} & 0.6803 & \textbf{0.6970} & \textbf{0.6952} & 0.6914 & 0.6859 & \textbf{0.6952} \\
    \midrule
    \multirow{2}[2]{*}{DeepModel} & N@3 & 0.6438 & \textbf{0.6389} & 0.6315 & \textbf{0.6359} & 0.6290 & \textbf{0.6410} & 0.6317 & \textbf{0.6355} & 0.6250 & \textbf{0.6333} & 0.6190 & \textbf{0.6401} & 0.6317 & \textbf{0.6357} & 0.6275 \\
          & H@3 & 0.7398 & \textbf{0.7212} & 0.7175 & \textbf{0.7193} & 0.7082 & \textbf{0.7286} & 0.7193 & \textbf{0.7249} & 0.7063 & \textbf{0.7138} & 0.7082 & \textbf{0.7268} & 0.7193 & \textbf{0.7212} & 0.7156 \\
    \midrule
    \multirow{2}[2]{*}{DMF} & N@3 & 0.5301 & \textbf{0.4988} & 0.4751 & \textbf{0.5122} & 0.4927 & \textbf{0.5108} & 0.4858 & \textbf{0.5143} & 0.4731 & \textbf{0.5115} & 0.4827 & 0.4986 & \textbf{0.5040} & \textbf{0.5231} & 0.4652 \\
          & H@3 & 0.6822 & 0.6283 & \textbf{0.6301} & \textbf{0.6468} & 0.6431 & \textbf{0.6413} & 0.6320 & \textbf{0.6375} & 0.6115 & \textbf{0.6394} & 0.6245 & \textbf{0.6264} & 0.6245 & \textbf{0.6580} & 0.6190 \\
    \bottomrule
    \end{tabular}%
    }
  \label{tb:insurance_result}
  \vspace{-5pt}
\end{table*}%

\subsection{Experimental Settings}
To better accommodate the ranking task, we apply the Bayesian Personalized Ranking (BPR)~\cite{rendle2012bpr} loss as the recommendation loss in Eq.\eqref{eq} for all the baseline models. For each user-item pair in the training dataset, we randomly sample one item that the user has never interacted with as the negative sample in one training epoch. We set the learning rate to 0.001.  $\ell_2$-regularization coefficient is 0.0001 for all the datasets. Dropout rate is 0.2. Early stopping is applied and the best models are selected based on the performance on the validation set. Rectified Linear Unit (ReLU) is used as the activation function for DMF and DeepModel. We apply Adam~\cite{kingma2014adam} as the optimization algorithm to update the model parameters. The adversarial coefficient $\lambda$ in Eq.\eqref{eq} is selected from $[10, 20, 50]$ for MovieLens, while $[100, 200, 500, 1000]$ for the Insurance dataset. The filter modules are two-layer neural networks with LeakyReLU as the non-linear activation function. The classifiers (discriminators and attackers) are multi-layer perceptrons with the number of layers set to 7, LeakyReLU as the activation function, and the dropout rate is set to 0.3. Batch normalization is applied for training classifiers. The open source code can be found in GitHub~\footnote{\url{https://github.com/yunqi-li/Personalized-Counterfactual-Fairness-in-Recommendation}}.

\subsection{Main Results}

In this section, we show the main results of our experiments, including comparing recommendation performance and fairness for all the baseline models under the SM and CM settings. 
\\\\\noindent
\textbf{Fairness Improvement}. We provide the evaluation results of attackers in Table~\ref{auc} to present the effectiveness of adversary training for achieving counterfactual fairness. According to the table, we observe that the AUC scores of the baseline models are significantly higher than 0.5, which means that the attackers can easily discriminate user embeddings from sensitive features. In other words, for general recommendation models, no matter they are shallow or deep, the sensitive information of users can be learned from the data even such information is not explicitly used, thus leads to unfair recommendation results. Furthermore, the AUC scores of both SM and CM methods are around 0.5, i.e., it is hard for the attackers to distinguish the sensitive features from the filtered user embeddings. It indicates that both CM and SM implementations are effective for achieving counterfactual fairness in recommendations. Addtionally, from the observations of the AUC scores for SM and CM methods, we see that SM achieves lower AUC than CM in most cases, which shows that training separate filter functions for each combination of sensitive features can usually remove more sensitive information than the combination method.
\\\\\noindent
\textbf{Recommendation Performance}. We compare the recommendation performance of the baseline models and the two fair methods. We show $\mathrm{NDCG}$ and $\mathrm{Hit}$ results of all the models on the MoiveLens and Insurance datasets in Table~\ref{tb:movie_result} and Table~\ref{tb:insurance_result}, respectively. We can see that both of the two fair methods can still achieve high recommendation quality. Although fair methods will suffer from a little sacrifice on recommendation performance to guarantee the fairness requirement, the recommendation performance is still very close to the original performance. This is acceptable as there is typically an inevitable trade-off between prediction accuracy and fairness \cite{menon2018cost, zafar2017fairness, bose2019compositional}. The reason why there exists trade-off between fairness and recommendation performance is that the fair methods are aiming at filtering out the information of certain sensitive features from user embeddings, which will to some extent reduce the information contained in the embeddings, thus decreasing the recommendation performance.



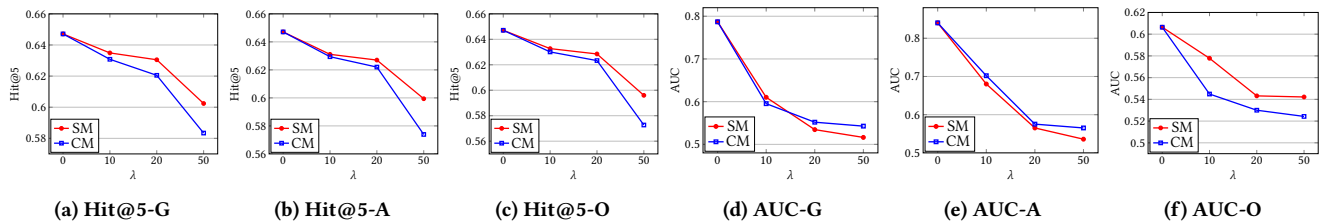
\begin{figure*}[t!]
    \centering
        
        \pgfplotsset{label style={font=\huge},
                    tick label style={font=\huge}}
        \begin{subfigure}[b]{0.16\textwidth}   
            \centering 
            \resizebox{\linewidth}{!}{
                \begin{tikzpicture}
                    \begin{axis}[
                        scaled y ticks = false,
                        ymin=0.57, ymax=0.66,
                        ylabel={Hit@5},
                        xlabel={$\lambda$},
                        symbolic x coords={0,10,20, 50},
                        xtick=data,
                        scaled ticks=false, tick label style={/pgf/number format/fixed},
                        ymajorgrids=true,
                        legend style={at={(0.01,0.01)},
                        anchor=south west,
                        font=\fontsize{18}{5}\selectfont,
                        }
                    ]
                    \addplot[mark=*,red, line width=1.5pt]
                    coordinates{
                        (0,0.6471)
                        (10,0.6349)
                        (20,0.6305)
                        (50,0.6024)
                    }; 
                    \addlegendentry{SM}
                    \addplot[mark=square, blue, line width=1.5pt]
                    coordinates{
                        (0,0.6471)
                        (10,0.6308)
                        (20,0.6205)
                        (50,0.5834)
                    }; 
                    \addlegendentry{CM} 
                    \end{axis}
                \end{tikzpicture}
            }
            \caption[]%
            {{\small Hit@5-G}}
            \label{fig:hit_g}
        \end{subfigure}
        \begin{subfigure}[b]{0.16\textwidth}   
            \centering 
            \resizebox{\linewidth}{!}{
                \begin{tikzpicture}
                    \begin{axis}[
                        scaled y ticks = false,
                        ymin=0.56, ymax=0.66,
                        ylabel={Hit@5},
                        xlabel={$\lambda$},
                        symbolic x coords={0,10,20, 50},
                        xtick=data,
                        scaled ticks=false, tick label style={/pgf/number format/fixed},
                        ymajorgrids=true,
                        legend style={at={(0.01,0.01)},
                        anchor=south west,
                        font=\fontsize{18}{5}\selectfont,
                        }
                    ]
                    \addplot[mark=*,red, line width=1.5pt]
                    coordinates{
                        (0,0.6471)
                        (10,0.6311)
                        (20,0.6270)
                        (50,0.5994)
                    }; 
                    \addlegendentry{SM}
                    \addplot[mark=square, blue, line width=1.5pt]
                    coordinates{
                        (0,0.6471)
                        (10,0.6294)
                        (20,0.6220)
                        (50,0.5739)
                    }; 
                    \addlegendentry{CM} 
                    \end{axis}
                \end{tikzpicture}
            }
            \caption[]%
            {{\small Hit@5-A}}
            \label{fig:hit_g}
        \end{subfigure}
        \begin{subfigure}[b]{0.16\textwidth}   
            \centering 
            \resizebox{\linewidth}{!}{
                \begin{tikzpicture}
                    \begin{axis}[
                        scaled y ticks = false,
                        ymin=0.55, ymax=0.66,
                        ylabel={Hit@5},
                        xlabel={$\lambda$},
                        symbolic x coords={0,10,20, 50},
                        xtick=data,
                        scaled ticks=false, tick label style={/pgf/number format/fixed},
                        ymajorgrids=true,
                        legend style={at={(0.01,0.01)},
                        anchor=south west,
                        font=\fontsize{18}{5}\selectfont,
                        }
                    ]
                    \addplot[mark=*,red, line width=1.5pt]
                    coordinates{
                        (0,0.6471)
                        (10,0.6327)
                        (20,0.6285)
                        (50,0.5960)
                    }; 
                    \addlegendentry{SM}
                    \addplot[mark=square, blue, line width=1.5pt]
                    coordinates{
                        (0,0.6471)
                        (10,0.6301)
                        (20,0.6233)
                        (50,0.5727)
                    }; 
                    \addlegendentry{CM} 
                    \end{axis}
                \end{tikzpicture}
            }
            \caption[]%
            {{\small Hit@5-O}}
            \label{fig:hit_g}
        \end{subfigure}
        \begin{subfigure}[b]{0.16\textwidth}   
            \centering 
            \resizebox{\linewidth}{!}{
                \begin{tikzpicture}
                    \begin{axis}[
                        scaled y ticks = false,
                        ymin=0.48, ymax=0.82,
                        ylabel={AUC},
                        xlabel={$\lambda$},
                        symbolic x coords={0,10,20, 50},
                        xtick=data,
                        scaled ticks=false, tick label style={/pgf/number format/fixed},
                        ymajorgrids=true,
                        legend style={at={(0.01,0.01)},
                        anchor=south west,
                        font=\fontsize{18}{5}\selectfont,
                        }
                    ]
                    \addplot[mark=*,red, line width=1.5pt]
                    coordinates{
                        (0,0.7870)
                        (10,0.6102)
                        (20,0.5345)
                        (50,0.5163)
                    }; 
                    \addlegendentry{SM}
                    \addplot[mark=square, blue, line width=1.5pt]
                    coordinates{
                        (0,0.7870)
                        (10,0.5953)
                        (20,0.5519)
                        (50,0.5427)
                    }; 
                    \addlegendentry{CM} 
                    \end{axis}
                \end{tikzpicture}
            }
            \caption[]%
            {{\small AUC-G}}
            \label{fig:hit_g}
        \end{subfigure}
        \begin{subfigure}[b]{0.16\textwidth}   
            \centering 
            \resizebox{\linewidth}{!}{
                \begin{tikzpicture}
                    \begin{axis}[
                        scaled y ticks = false,
                        ymin=0.50, ymax=0.88,
                        ylabel={AUC},
                        xlabel={$\lambda$},
                        symbolic x coords={0,10,20, 50},
                        xtick=data,
                        scaled ticks=false, tick label style={/pgf/number format/fixed},
                        ymajorgrids=true,
                        legend style={at={(0.01,0.01)},
                        anchor=south west,
                        font=\fontsize{18}{5}\selectfont,
                        }
                    ]
                    \addplot[mark=*,red, line width=1.5pt]
                    coordinates{
                        (0,0.8403)
                        (10,0.6803)
                        (20,0.5653)
                        (50,0.5359)
                    }; 
                    \addlegendentry{SM}
                    \addplot[mark=square, blue, line width=1.5pt]
                    coordinates{
                        (0,0.8403)
                        (10,0.7020)
                        (20,0.5753)
                        (50,0.5653)
                    }; 
                    \addlegendentry{CM} 
                    \end{axis}
                \end{tikzpicture}
            }
            \caption[]%
            {{\small AUC-A}}
            \label{fig:hit_g}
        \end{subfigure}
        \begin{subfigure}[b]{0.16\textwidth}   
            \centering 
            \resizebox{\linewidth}{!}{
                \begin{tikzpicture}
                    \begin{axis}[
                        scaled y ticks = false,
                        ymin=0.49, ymax=0.62,
                        ylabel={AUC},
                        xlabel={$\lambda$},
                        symbolic x coords={0,10,20, 50},
                        xtick=data,
                        scaled ticks=false, tick label style={/pgf/number format/fixed},
                        ymajorgrids=true,
                        legend style={at={(0.01,0.01)},
                        anchor=south west,
                        font=\fontsize{18}{5}\selectfont,
                        }
                    ]
                    \addplot[mark=*,red, line width=1.5pt]
                    coordinates{
                        (0,0.6064)
                        (10,0.5778)
                        (20,0.5432)
                        (50,0.5422)
                    }; 
                    \addlegendentry{SM}
                    \addplot[mark=square, blue, line width=1.5pt]
                    coordinates{
                        (0,0.6064)
                        (10,0.5449)
                        (20,0.5300)
                        (50,0.5243)
                    }; 
                    \addlegendentry{CM} 
                    \end{axis}
                \end{tikzpicture}
            }
            \caption[]%
            {{\small AUC-O}}
            \label{fig:hit_g}
        \end{subfigure}
        \caption[]
        {Impact of the adversarial coefficient $\lambda$ on (a)-(c) recommendation performance $w.r.t$ Hit@5, and (d)-(f) classification quality of the attackers $w.r.t$ AUC. In the subfigure titles, G, A and O represent the sensitive features $gender$, $age$ and $occupation$, respectively. The reported results are from the Biased-MF model on the MovieLens dataset.} 
        \label{Ff1}
        \vspace{-10pt}
\end{figure*}

Furthermore, comparing the recommendation performance of SM and CM, we can see that SM performs better on most of the metrics on the two datasets, especially for the case of single feature. We analyze the phenomenon as follows: for CM method, although we aim at learning one filter function for each sensitive feature and wish that the filter function $f_k$ only filters out the information of the $k$-th sensitive feature, we actually train the combination of all the filter functions together. Such learning process will force the filter functions to also remove the information that they should not remove. For example, if one user chooses a mask $\textbf{Q}_u=\{gender,age\}$, and we use CM to generate user embedding as $\textbf{r}_u^*=\frac{1}{2}(f_g(\textbf{r}_u)+f_a(\textbf{r}_u))$. When we train the filter functions to require that discriminators cannot classify $gender$ and $age$ from $\textbf{r}_u^*$, $f_g$ may be forced to also remove some information of $age$ to satisfy the requirement, and it is the same for $f_a$. However, for SM method, we train one filter for each combination of sensitive features, thus the filter $f_g$ will only filter out the information of $gender$ as it has no contact with other features during the learning process. Therefore, comparing the recommendation performance of $f_g$ trained by SM and CM method, we will find that SM filter performs better as it keeps more information for making recommendations, and such performance is even more significant for the evaluation of the single feature case. 

However, SM method also has drawbacks as it will be infeasible when the potential number of combinations of the sensitive features is a lot. During the experiments, we found that SM usually needs more epochs to converge than CM, which will take more time for the training process. It is reasonable since SM method has more filter functions than CM, which needs extra epochs to make all filter functions be sampled and trained sufficiently. 

Therefore, we suggest using SM to achieve fair models when the number of sensitive features is small to keep better recommendation performance while use CM to handle the situations where there are too many combinations of sensitive features.

\subsection{Ablation Study}
We study the influence of the adversarial coefficient $\lambda$ on the recommendation performance and fairness in this section. As discussed before, $\lambda$ controls the trade-off between recommendation quality and fairness.
Theoretically, the larger $\lambda$ is, the greater the influence of the discriminator loss will be in the whole loss, which means that we have a stricter demand for fairness and may have to scarify more recommendation performance to meet the requirement. And when $\lambda\rightarrow \infty$, there is a trivial solution that the filter functions always output a constant, resulting in a fair result but losing all the information for making accurate recommendations. To verify the influence of $\lambda$, we draw the change of AUC scores and the recommendation performance with the change of $\lambda$ in Figure \ref{Ff1}. Since similar trend is observed for other recommendation models, datasets, metrics, and the combinations of sensitive features, we plot the results of Biased-MF on MoiveLens under three single feature cases to keep the figure clarity. We can see that the experimental results are consistent with our analysis above. i.e., with the increase of $\lambda$, the recommendation performance has been declining, while the AUC score is getting lower and lower, which means that the system is getting fairer.

\section{Conclusion}
\label{sec:conclusions}

In this paper, we study the fairness problem for users in recommender systems. To better assess fairness in recommendation, we adopt causal-based fairness notions to reason about the causal relations between the protected features and the predicted results, instead of merely considering the traditional association-based fairness notions.
We also consider personalized fairness which allows different users to have different fairness demands. 
Technically, to implement individual-level fairness for users, we approach counterfactual fairness for recommendation. 
We propose to generate feature-independent user embeddings to satisfy the counterfactual fairness requirements in recommendation, and we introduce an adversary learning method to learn such feature-independent user embeddings. Experiments on two real-world datasets with several shallow or deep recommendation algorithms show that our method is able to generate counterfactually fair recommendations for users with a desirable recommendation performance.

This work is one of our first steps towards personalized fairness under counterfactual notions in recommendation systems, and there is much room for future improvements. Except for the recommendation scenario that we considered in this work, we believe personalized fairness is also important for other intelligent systems such as search engines, social networks, language modeling and image processing, which we will consider in the future. 

\section*{Acknowledgement}
We thank the reviewers for the reviews and suggestions. This work was supported in part by NSF IIS-1910154 and IIS-2007907. Any opinions, findings, conclusions or recommendations expressed in this material are those of the authors and do not necessarily reflect those of the sponsors.

\bibliographystyle{ACM-Reference-Format}
\bibliography{sample-base}

\appendix









\end{document}